\theoremstyle{plain}
\newtheorem{Th}{Theorem}
\newtheorem{Lem}[Th]{Lemma}
\newtheorem{Prop}[Th]{Proposition}
\newtheorem*{GIS}{Geometric Integrability Scheme}
\newtheorem*{4Dconsistency}{Four Dimensional Consistency of the Geometric
Integrability Scheme}
\theoremstyle{definition}
\newtheorem{Def}{Definition}
\theoremstyle{remark}
\newtheorem*{Rem}{Remark}
\numberwithin{equation}{section}
\newcommand{\PP}{{\mathbb P}}
\newcommand{\DD}{{\mathbb D}}
\newcommand{\ZZ}{{\mathbb Z}}
\newcommand{\AAf}{{\mathbb A}}
\newcommand{\bx}{\boldsymbol{x}}
\newcommand{\bX}{\boldsymbol{X}}
\newcommand{\tbX}{\tilde{\bX}}
\newcommand{\tH}{\tilde{H}}
\newcommand{\tQ}{\tilde{Q}}
\newcommand{\D}{{\Delta}}
\begin{document}

\title[Geometric algebra and  quadrilateral lattices]
{Geometric algebra and  quadrilateral lattices}

\author{Adam Doliwa}

\address{Adam Doliwa, Wydzia{\l} Matematyki i Informatyki,
Uniwersytet Warmi\'{n}sko-Mazurski w Olsztynie,
ul.~\.{Z}o{\l}nierska~14, 10-561 Olsztyn, Poland}

\email{doliwa@matman.uwm.edu.pl}

\date{\today}
\keywords{integrable discrete geometry; incidence geometry, 
Darboux transformations}

\begin{abstract}
Motivated by the fundamental results of the geometric algebra we 
study quadrila\-teral lattices in projective spaces over division rings. After
giving the noncommutative discrete Darboux equations we discuss differences and
similarities with the commutative case. Then we consider the 
fundamental
transformation of such lattices in the vectorial setting and we show the
corresponding permutability theorems. We discuss also the
possibility of obtaining in a similar spirit a
noncommutative version of the B-(Moutard) quadrilateral lattices. 
  
\end{abstract} 
\maketitle


\section{Introduction}

\subsection{Integrable discrete geometry}
In the course of last ten years many results of the classical
geometric approach to 
integrable partial differential equations \cite{Sym,RogersSchief}
has been transfered to the discrete
setting (see \cite{DS-EMP} and references therein). The
key role in the theory has been attributed to the
multidimensional quadrilateral lattice \cite{MQL}, which is the discrete analog
\cite{Sauer} of a conjugate net \cite{Darboux-OS,Eisenhart-TS}. 
It turns out that integrability of the quadrilateral lattice is encoded in a
very simple geometric statement, visualized on Figure~\ref{fig:TiTjTkx}. 
\begin{GIS}
Consider points $x_0$, $x_1$, $x_2$ and $x_3$ in general position in the
projective space $\PP^M$, $M\geq 3$. On
the plane $\langle x_0, x_i, x_j \rangle$, $1\leq i < j \leq 3$ choose a point
$x_{ij}$ not on the lines  $\langle x_0, x_i \rangle$, $\langle x_0,x_j
\rangle$ and $\langle x_i, x_j \rangle$. Then there exists the
unique point $x_{123}$
which belongs simultaneously to the three planes 
$\langle x_3, x_{13}, x_{23} \rangle$,
$\langle x_2, x_{12}, x_{23} \rangle$ and
$\langle x_1, x_{12}, x_{13} \rangle$.
\end{GIS}
\begin{figure}[h!]
\begin{center}
\includegraphics{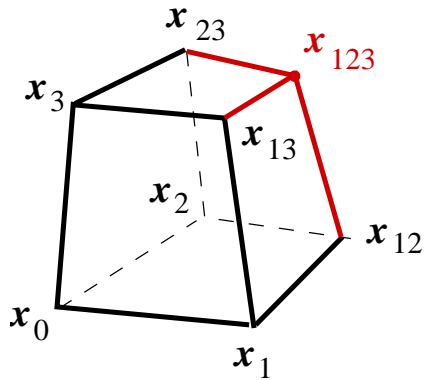}
\end{center}
\caption{The Geometric Integrability Scheme}
\label{fig:TiTjTkx}
\end{figure}
Despite of (or, thanks to) extremely simple formulation of
the Geometric Integrability Scheme, 
the corresponding nonlinear discrete system 
(the discrete Darboux equations) turns out to be the generic 
discrete system integrable by the nonlocal $\bar\partial$-dressing
method \cite{BoKo}.  Also the finite-gap integration scheme, a standard tool in
the integrable systems theory \cite{BBEIM}, can be applied to
that system in its pure form \cite{AKV}. 
We mention that the \emph{differential} Darboux equations, which have appeared 
first in projective
differential geometry of multidimensional conjugate nets \cite{Darboux-OS}, 
play an important role \cite{BoKo-N-KP,DMMMS} in the multicomponent
Kadomtsev--Petviashvilii (KP) hierarchy, 
which is commonly considered \cite{DKJM,KvL}
as the fundamental system of equations in integrability theory.

Integrable reductions of the quadrilateral lattice (and thus of the discrete
Darboux equations) arise from additional constraints which are compatible with
the geometric integrability scheme. In \cite{q-red,BQL,CQL} we isolated the
incidence geometry theorems which are responsible for the basic
reductions of the quadrilateral lattice:  
B- and C-reductions providing geometric interpretation for BKP and CKP 
hierarchies \cite{DJKM-CKP}, and the so called quadratic reduction. 

On the
geometric level there is no essential difference between the quadrilateral
lattice construction and between its Darboux-type transformations
\cite{MDS,KoSchief2,TQL,MM}. In particular, all classical transformations of
conjugate nets \cite{Eisenhart-TS,DMMMS} have found their quadrilateral lattice
analogs and have been shown to be reductions of the discrete analog of the 
fundamental transformation of Jonas.

Although the geometric integrability scheme was initially considered for real
projective spaces, it is valid to projective spaces over other
fields. In particular, finite field version 
together with the algebro-geometric method of construction of solutions to the 
corresponding discrete equations has been
given in \cite{DBK-JPA,BD-CMP}. 
The main idea of the present paper is that the geometric integrability scheme
remains valid in projective spaces over division rings (called also skew fields,
for details see \cite{Cohn}), whose simplest example are
quaternions. We would like to mention that division rings appear naturally in
a generalization of the notion of determinant to matrices with noncommutative
entries \cite{GGRW-quasideterminants}. The so called quasideterminants have been
effective in many areas including noncommutative symmetric functions,
noncommutative integrable systems, quantum algebras and Yangians,
noncommutative algebraic geometry. Last but not least, the division ring of
formal pseudodifferential operators lies in the heart of the Sato approach
\cite{Sato} to integrable systems, see also \cite{Parshin}. We should warn the
Reader that a ring of square matrices usually is not a division ring (the sum of
two invertible matrices does not have to be invertible or the zero matrix).
Also, by the Wedderburn theorem, finite division rings are commutative.

The subject of noncommutative versions of integrable systems was studied in
the literature in many papers, see, for example \cite{DM-H,LiNimmo} and references
therein; we would like to stress that in the present paper
noncommutativity is considered only on the level of dependent variables, i.e., 
the independent variables are still commutative ones.
 In relation to our work we would like to mention the paper
\cite{Nimmo-NCKP} where the noncommutative discrete KP equation was considered,
and the papers \cite{BobSur-NC,Schief-LPR}. Moreover, in \cite{Sergeev} 
a quantization of the discrete Darboux equations was investigated. It should be
also mentioned that already in the paper \cite{BoKo} the discrete Darboux
equations, together with some of their transformations,
were considered in the matrix version within the non-local
$\bar\partial$-dressing method, thus in the
noncommutative setting (for the differential matrix Darboux-Manakov--Zakharov
equations see \cite{ZakhMan}).

The paper is constructed as follows. In Section \ref{sec:QL} we study the
multidimensional quadrilateral lattices in projective spaces over division
rings. In particular, we consider the corresponding discrete noncommutative
Darboux equations together with the corresponding linear problem, and 
we discuss differences and similarities with the commutative case. 
In section \ref{sec:FT} we
give the vectorial fundamental transformation for such lattices. Finally, 
in Section \ref{sec:BC} we study
possibility of the geometric
generalization of the B-quadrilateral lattices (and thus of the
discrete BKP equations) to the noncommutative setting. 
We show that the additional incidence geometry structures which imply
integrability of the B-quadrilateral lattices \cite{BQL}
force the division ring to be commutative. 

The main results of the paper were presented in my talk
\emph{Geometric algebra and quadrilateral lattices} during the ISLAND 3 
(Integrable Systems: Linear and Nonlinear Dynamics) conference
\emph{Algebraic Aspects of Integrable Systems}, Port Ellen, 
Isle of Islay, Scotland (July, 2007).

\subsection{Some basic facts from geometric algebra}
Because the intended target
of the paper consists of 
specialists from integrable systems theory we start from
presenting some basic facts on the interplay between
incidence geometry axioms and the corresponding algebraic structures (for
details see \cite{Hilbert,Artin,Baer,BeutRos,BeukenhoutCameron-H}).

A \emph{projective plane} is a set, whose elements are called
\emph{points} and a set of
subsets, called \emph{lines}, satisfying the following four axioms:
\begin{itemize}
\item[P1] Two distinct points lie on one and exactly one line;
\item[P2] Two distinct lines meet in precisely one point;
\item[P3] There exist four points with no three collinear.
\end{itemize}
\begin{figure}[h!]
\begin{center}
\includegraphics[width=7cm]{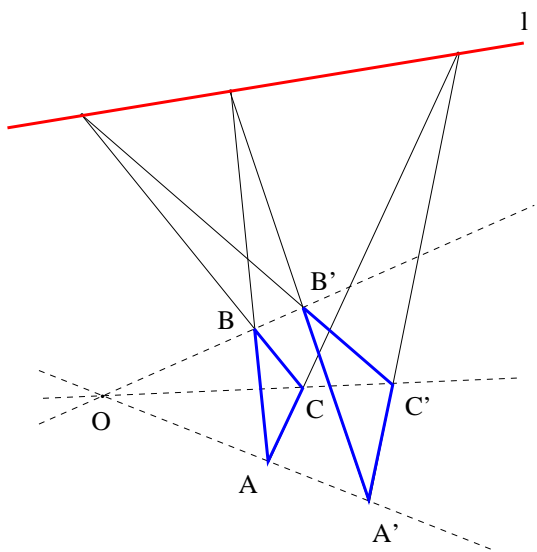}
\end{center}
\caption{The Desargues configuration: the triangles $\triangle ABC$ and
$\triangle A^\prime B^\prime C^\prime $ are perspective from the point $O$,
and are perspective from the line $l$.}
\label{fig:Desargues}
\end{figure}
It is known that axioms P1-P3 make possible to introduce on the plane
coordinates from an
algebraic structure called the \emph{ternary ring}. 
If, in addition to P1-P3, \emph{the Desargues axiom} holds: 
(see Figure~\ref{fig:Desargues}):
\begin{itemize}
\item[P4]
If two triangles are perspective from a point then they are
perspective from a line;
\end{itemize}
then axioms P1-P4 imply possibility of coordinatization of the plane
in terms of a 
division ring.
If, instead, one adds to the axioms P1-P3 the so called \emph{Pappus' axiom}:
\begin{itemize}
\item[P4']
If the six vertices of a hexagon lie alternately on two lines, then the three
points of intersection of pairs 
of opposite sides are collinear;
\end{itemize}
\begin{figure}
\begin{center}
\includegraphics[width=8cm]{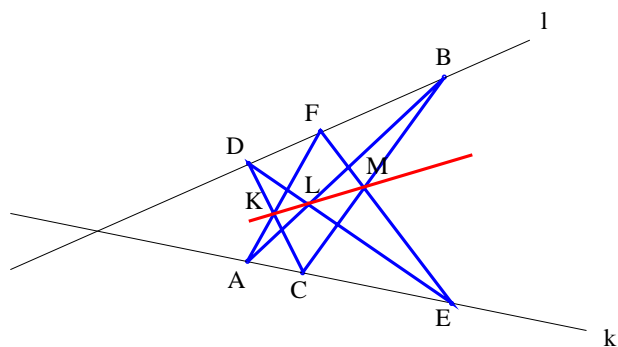}
\end{center}
\caption{The Pappus configuration: the vertices of the hexagon $ABCDEF$
lie alternately on two (coplanar) lines $k$ and $l$, and the three
points $K$, $L$, $M$ of intersection of pairs 
of opposite sides are collinear}
\label{fig:Pappus}
\end{figure}
then one has coordinates in commutative division ring, i.e. in a field. 

For more dimensional projective spaces the basic incidence axioms, analogous to
P1--P3, are enough to show that the spaces are actually coordinatized by division
rings, i.e., there is no need for the Desargues axiom (which becomes a theorem).
In order to have a projective geometry over a field one has to add the Pappus
axiom (or its equivalent formulations). 

\section{Quadrilateral lattice in spaces over division
rings (affine description)}
\label{sec:QL} 

Because the Geometric Integrability Scheme is valid in projective spaces over
division rings,  this motivates us to consider quadrilateral lattices in such
spaces.

\subsection{The Laplace and Darboux equations}
Consider a multidimensional quadrilateral lattice, i. e., a mapping
$x :\ZZ^N \rightarrow \PP^M(\DD)$
with all the elementary quadrilaterals planar \cite{MQL}; here 
$\ZZ^N$ is  $N\geq 3$ dimensional integer lattice, 
and $\PP^M(\DD)$ is $M\geq N$ dimensional right projective space over 
division ring $\DD$ (we multiply vectors by scalars from right). 
It turns out that the theory of quadrilateral lattices in
spaces over division rings does not differ considerably from the standard case
where $\DD$ was assumed to be commutative. One should be only careful with the
order of coefficients.

Below we will use the affine description of the quadrilateral lattice. Recall
that the affine space $\AAf^M = \PP^M \setminus H_\infty$ is the projective
space with removed a fixed hyperplane 
$H_\infty\subset\PP^M$ (called the hyperplane at infinity; see, for example
\cite{Coxeter-PG}). Two lines of $\AAf^M$
called parallel if they intersect in a point of $H_\infty$.

In the affine gauge the 
lattice is represented by a mapping
$\bx: \ZZ^N \rightarrow \DD^M$, the planarity
condition can be formulated in terms of the Laplace equations
\begin{equation}  \label{eq:Laplace}
T_i T_j\bx - \bx =(T_i\bx - \bx) A_{ij}+
(T_j\bx - \bx)A_{ji},\quad i\not= j, \qquad i,j=1 ,\dots, N,
\end{equation}
where $T_i$ is the translation operator in the $i$-th direction. 
Then the coefficients $A_{ij}:\ZZ^N\to\DD$ satisfy, by compatibility of 
the system \eqref{eq:Laplace}, 
\begin{equation} \label{eq:MQL-A}
A_{jk} T_k A_{ji} = 1 + (A_{ji} - 1)T_j A_{ik} + (A_{jk} - 1)T_j A_{ki},
\qquad i, j, k \quad \text{distinct}.
\end{equation}
The $i\leftrightarrow k$ symmetry of RHS of \eqref{eq:MQL-A} implies existence
of the potentials $H_{i}$, $i=1, \dots , N$, 
(called the Lam\'e coefficients) such that
\begin{equation}   \label{def:A-H}
A_{ij}= T_i\left( H_i^{-1} T_j H_i\right) , \quad i\ne j .
\end{equation}

If we introduce the suitably scaled tangent 
vectors $\bX_i:\ZZ^N\to\DD^M$, $i=1,...,N$, by equations
\begin{equation}  \label{def:HX}
\D_i\bx = \bX_i T_iH_i,
\end{equation}
(here $D_i = T_i - \mathrm{id}$)
and the rotation coefficients $Q_{ij}:\ZZ^N\to\DD$, $i\ne j$, by
\begin{equation} \label{eq:lin-H}
\D_iH_j = Q_{ij} T_iH_i, \quad i\ne j ,
\end{equation}
then equations \eqref{eq:Laplace} can be rewritten as a
first order system
\begin{equation} \label{eq:lin-X}
\D_j\bX_i = \bX_j T_j Q_{ij},    \quad i\ne j .
\end{equation}
The compatibility condition for the system~\eqref{eq:lin-X} (or its
adjoint~\eqref{eq:lin-H})
gives the following form of the MQL (or discrete Darboux) equations
\begin{equation} \label{eq:MQL-Q}
\D_k Q_{ij} = Q_{kj} T_k Q_{ik},    \qquad i, j, k \quad \text{distinct}.
\end{equation}
\begin{Rem}
The above equations (up to small modification which, in our language, results
from considering left-vector spaces)
appeared first in the matrix setting in \cite{BoKo}.
\end{Rem}

An important geometric fact, which lies in the heart of integrability of
the quadrilateral lattice, is the multidimensional consistency of the
geometric integrability scheme. Its four dimensional version reads as follows
(see Fig.~\ref{fig:4Dconsistency}).
\begin{figure}
\begin{center}
\includegraphics[width=8cm]{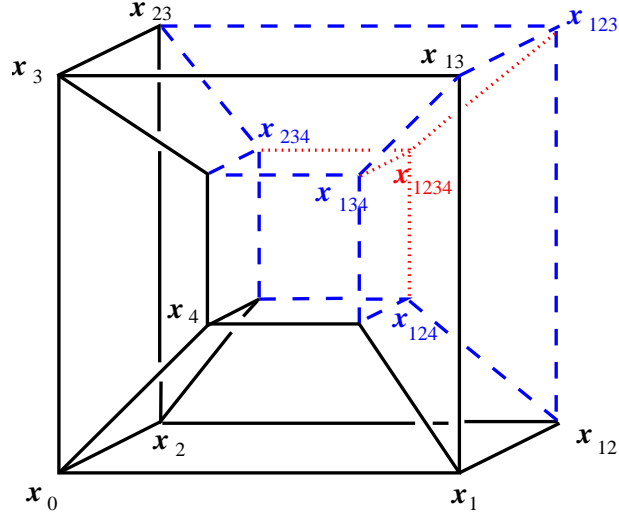}
\end{center}
\caption{The four dimensional consistency of the geometric integrability scheme:
Starting from the initial quadrilaterals (solid lines) in the fist step of the
construction (dashed lines) one obtains four hexahedra shearing the vertex
$x_0$. The second step of the construction (dotted lines) gives another
four hexahedra
exhausting this way all the hexahedra of the hypercube.}
\label{fig:4Dconsistency}
\end{figure}
\begin{4Dconsistency}
Given points $x_0$, $x_1$, $x_2$, $x_3$ and $x_4$
in general position in $\PP^M$, $M\geq 4$, choose generic points 
$x_{ij}\in\langle x_0, x_i, x_j \rangle$, $1\leq i < j \leq 4$,
on the corresponding planes. Using
the planarity condition construct the points 
$x_{ijk}\in\langle x_0, x_i, x_j , x_k\rangle$, $1\leq i < j < k \leq 4$ -- the
remaining vertices of the four (combinatorial) cubes. Then there are four
different
ways to construct the point $x_{1234}$, which is the last vertex of the 
(combinatorial) hypercube. However all of them give the same result due to the
fact that the point $x_{1234}$ is the unique
intersection point of the four three dimensional subspaces
$\langle x_{1}, x_{12}, x_{13}, x_{14} \rangle$,
$\langle x_{2}, x_{12}, x_{23}, x_{24} \rangle$,
$\langle x_{3}, x_{13}, x_{23}, x_{34} \rangle$,
and
$\langle x_{4}, x_{14}, x_{24}, x_{34} \rangle$ of the four dimensional subspace
$\langle x_{0}, x_{1}, x_{2}, x_{3} , x_{4} \rangle$.
\end{4Dconsistency}

\subsection{The backward data and the connection factors}
\label{sec:tQL}
The backward tangent vectors $\tbX_i$, the backward Lam\'e coefficients
$\tH_i$, $i=1,\dots,N$ and the backward rotation coefficients $\tQ_{ij}$ 
are defined with the help of the backward
shifts $T_i^{-1}$. They are again chosen 
in such a way that the
$T_i^{-1}$ variation of $\tbX_j$ is proportional to $\tbX_i$ only:
\begin{equation} \label{eq:lin-bX}
\quad \D_i\tbX_j =  (T_i\tbX_i)\tQ_{ij},    \quad i\ne j \; .
\end{equation}
Then
\begin{equation} \label{eq:b-H-X}
\quad \D_i\bx = (T_i\tbX_i ) \tH_i ,
\end{equation}
and
\begin{equation} \label{eq:lin-bH}
\D_j\tH_i =  (T_j\tQ_{ij})\tH_{j},   \quad i\ne j .
\end{equation}
The new functions $\tQ_{ij}$ satisfy the backward Darboux (MQL) 
equations
\begin{equation} \label{eq:MQL-tQ}
\D_k \tQ_{ij} = (T_k \tQ_{ik}) \tQ_{kj} ,    \qquad i, j, k \quad \text{distinct}.
\end{equation}
\begin{Rem}
Notice that, opposite to the commutative case \cite{KoSchief2,DS-sym},
the backward Darboux equations are not the same like the forward Darboux
equations~\eqref{eq:MQL-Q}.
\end{Rem}

The connection factors $\rho_i:\ZZ^N\to \DD$ are the 
proportionality coefficients between
$\bX_i$ and $T_i\tbX_i$ (both vectors are proportional to $\D_i\bx$):
\begin{equation} \label{eq:def-rho}
\bX_i =  -( T_i\tbX_i) \rho_i , \qquad 
T_iH_i = - \rho_i^{-1}\tH_i, \quad 
i=1,\dots ,N . 
\end{equation}
Going around an elementary quadrilateral it is not difficult to show that
\begin{equation} \label{eq:Q-Qt}
\rho_j T_j Q_{ij} = ( T_i\tQ_{ji}) \rho_i  ,
\end{equation}
and
\begin{equation} \label{eq:rho-constr}
T_j\rho_i = \rho_i ( 1 - (T_iQ_{ji})(T_jQ_{ij})) =
( 1 - (T_j\tQ_{ij})(T_i\tQ_{ji}) ) \rho_i, \quad i\ne j \; .
\end{equation}
\begin{Rem}
In the commutative case there exists yet another potential
(the $\tau$-function of the quadrilateral lattice) such that
\begin{equation*} \label{eq:tau}
\rho_i = \frac{T_i \tau}{\tau} ,
\end{equation*}
which is an immediate consequence of 
\begin{equation} \label{eq:rho-rho}
\frac{T_i \rho_j}{\rho_j} = \frac{T_j \rho_i}{\rho_i}.
\end{equation}
The last equation does not hold in the noncommutative case 
because, in general $(T_iQ_{ji})(T_jQ_{ij}) \neq (T_jQ_{ij})(T_iQ_{ji})$.
\end{Rem}

\section{Transformations of the quadrilateral lattice} \label{sec:FT}
Due to its vectorial character, the theory of transformations of quadrilateral 
lattices transfers to the noncommutative case almost without changes.
Therefore mostly we just state the relevant formulas (the proofs are by direct
verification along lines given in \cite{MDS,TQL,MM}).

Given the solution $\boldsymbol{Y}_i:\mathbb{Z}^N\to\DD^K$, 
of the linear system \eqref{eq:lin-X}, and given the solution 
$\boldsymbol{Y}^*_i:\mathbb{Z}^N\to(\DD^K)^*$, of the linear system 
\eqref{eq:lin-H}; we recall that elements of $\DD^K$ we represent by column
vectors, and elements of its dual $(\DD^K)^*$ as row vectors. These allow to
construct the linear operator valued potential 
$\boldsymbol{\Omega}(\boldsymbol{Y},\boldsymbol{Y}^*):
\mathbb{Z}^N\to M^K_K(\DD)$,
defined by 
\begin{equation} \label{eq:Omega-Y-Y}
\Delta_i \boldsymbol{\Omega}(\boldsymbol{Y},\boldsymbol{Y}^*) = 
\boldsymbol{Y}_i \otimes T_i\boldsymbol{Y}^*_i, 
\qquad i = 1,\dots , N;
\end{equation} 
similarly, one defines 
$\boldsymbol{\Omega}(\boldsymbol{X},\boldsymbol{Y}^*):
\mathbb{Z}^N\to M^M_K(\DD)$ and 
$\boldsymbol{\Omega}(\boldsymbol{Y},H):
\mathbb{Z}^N\to \DD^K$ by
\begin{align} \label{eq:Omega-X-Y}
\Delta_i \boldsymbol{\Omega}(\boldsymbol{X},\boldsymbol{Y}^*) & = 
\boldsymbol{X}_i \otimes T_i \boldsymbol{Y}^*_i, \\
\Delta_i\boldsymbol{\Omega}(\boldsymbol{Y},H)  & =
\boldsymbol{Y}_i \otimes T_i H_i. 
\end{align} 
We remark that because we multiply vectors from the right then covectors are
multiplied from the left. This makes the tensor products above well defined.
\begin{Prop} \label{prop:FT}
If $\boldsymbol{\Omega}(\boldsymbol{Y},\boldsymbol{Y}^*)$ is invertible then
the vector function $\bx^\prime:\ZZ^N\to\DD^M$ given by
\begin{equation} \label{eq:fund-vect}
\bx^\prime  = \bx - 
\boldsymbol{\Omega}(\boldsymbol{X},\boldsymbol{Y}^*)
\boldsymbol{\Omega}(\boldsymbol{Y},\boldsymbol{Y}^*)^{-1}
\boldsymbol{\Omega}(\boldsymbol{Y},H),
\end{equation}
represent a quadrilateral lattice (the vectorial fundamental transform of $x$),
whose Lam\'e coefficients $H_i^\prime$, normalized tangent vectors
$\bX_i^\prime$ and rotation coefficients $Q_{ij}^\prime$ are given by
\begin{align} 
\label{eq:fund-vect-H}
H_i^\prime  &= H_i - 
\boldsymbol{Y}^*_i
\boldsymbol{\Omega}(\boldsymbol{Y},\boldsymbol{Y}^*)^{-1}
\boldsymbol{\Omega}(\boldsymbol{Y},H),\\
\label{eq:fund-vect-X}
\bX^\prime_i  & = \bX_i - 
\boldsymbol{\Omega}(\boldsymbol{X},\boldsymbol{Y}^*)
\boldsymbol{\Omega}(\boldsymbol{Y},\boldsymbol{Y}^*)^{-1}
\boldsymbol{Y}_i,\\
\label{eq:fund-vect-Q}
Q_{ij}^\prime  & = Q_{ij} - 
\boldsymbol{Y}^*_j
\boldsymbol{\Omega}(\boldsymbol{Y},\boldsymbol{Y}^*)^{-1}
\boldsymbol{Y}_i.
\end{align}
Moreover, the backward data and the connection coefficients transform according
to
\begin{align} 
\label{eq:fund-vect-tH}
\tH_i^\prime  &= \tH_i +\rho_i 
\boldsymbol{Y}^*_i
\boldsymbol{\Omega}(\boldsymbol{Y},\boldsymbol{Y}^*)^{-1}
\boldsymbol{\Omega}(\boldsymbol{Y},H),\\
\label{eq:fund-vect-tX}
\tbX^\prime_i  & = \tbX_i + 
\boldsymbol{\Omega}(\boldsymbol{X},\boldsymbol{Y}^*)
\boldsymbol{\Omega}(\boldsymbol{Y},\boldsymbol{Y}^*)^{-1}
\boldsymbol{Y}_i\rho_i^{-1},\\
\label{eq:fund-vect-tQ}
\tQ_{ij}^\prime  & = \tQ_{ij} - 
\rho_i \boldsymbol{Y}^*_i
\boldsymbol{\Omega}(\boldsymbol{Y},\boldsymbol{Y}^*)^{-1}
\boldsymbol{Y}_j \rho_j^{-1},\\
\label{eq:fund-vect-rho}
\rho^\prime_i & = \rho_i(1 + T_i \boldsymbol{Y}^*_i
\boldsymbol{\Omega}(\boldsymbol{Y},\boldsymbol{Y}^*)^{-1}
\boldsymbol{Y}_i).
\end{align}
\end{Prop}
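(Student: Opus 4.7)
The plan is to verify each of the formulas \eqref{eq:fund-vect}--\eqref{eq:fund-vect-rho} by direct computation, using only the defining difference equations \eqref{eq:Omega-Y-Y}--\eqref{eq:Omega-X-Y} for the potentials together with the linear systems \eqref{eq:lin-H}, \eqref{eq:lin-X} and their backward counterparts \eqref{eq:lin-bX}, \eqref{eq:lin-bH}. The key technical tool I would record first is the noncommutative inverse-derivative identity
\begin{equation*}
\D_i\bigl[\boldsymbol{\Omega}(\bY,\bY^*)^{-1}\bigr] =
-\bigl(T_i\boldsymbol{\Omega}(\bY,\bY^*)^{-1}\bigr)
\bigl(\bY_i\otimes T_i\bY^*_i\bigr)\,
\boldsymbol{\Omega}(\bY,\bY^*)^{-1},
\end{equation*}
valid whenever the operator is invertible at both $n$ and $T_i n$, combined with the discrete Leibniz rule $\D_i(AB) = (\D_i A)(T_i B) + A(\D_i B)$. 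No other nontrivial identity is required.

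First I would establish \eqref{eq:fund-vect-X}, \eqref{eq:fund-vect-H} and \eqref{eq:fund-vect-Q} simultaneously. Applying $\D_j$ to the right-hand side of \eqref{eq:fund-vect-X}, substituting $\D_j\bX_i = \bX_j T_j Q_{ij}$ together with the $\D_j$ of each $\boldsymbol{\Omega}$-potential, and invoking the inverse-derivative identity above, one finds that a common right factor of the form $T_j\bY^*_j\,\boldsymbol{\Omega}(\bY,\bY^*)^{-1}\bY_i$ emerges; this factor is precisely $T_j Q_{ij}'$ as in \eqref{eq:fund-vect-Q}, while what remains on its left is exactly $\bX_j'$ from \eqref{eq:fund-vect-X}. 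This yields $\D_j\bX_i' = \bX_j' T_j Q_{ij}'$, i.e.\ the linear system \eqref{eq:lin-X} for the primed quantities. The analogous calculation applied to \eqref{eq:fund-vect-H} produces $\D_j H_i' = Q_{ji}' T_j H_j'$, so that \eqref{eq:lin-H} also holds for the primed data, and the Darboux equations \eqref{eq:MQL-Q} for $Q_{ij}'$ follow by compatibility. Finally, differentiating $\bx'$ in \eqref{eq:fund-vect} by the same technique yields $\D_i\bx' = \bX_i' T_i H_i'$, which is the affine form \eqref{def:HX} confirming that $\bx'$ represents a quadrilateral lattice.

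The backward formulas \eqref{eq:fund-vect-tH}--\eqref{eq:fund-vect-tQ} are handled in exactly the same spirit, by computing $\D_j$ of their right-hand sides and invoking the backward linear systems; the appearance of the factors $\rho_i^{\pm 1}$ is dictated by \eqref{eq:def-rho} and by the comparison of $\bX_i'$ with $T_i\tbX_i'$ that defines $\rho_i'$. The connection-factor formula \eqref{eq:fund-vect-rho} itself then comes from imposing $\bX_i' = -(T_i\tbX_i')\rho_i'$ on the already-established expressions for $\bX_i'$ and $\tbX_i'$, with the relation \eqref{eq:Q-Qt} simplifying the resulting $\boldsymbol{\Omega}^{-1}$-terms. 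The main obstacle throughout is purely bookkeeping: one must maintain the conventions that scalars act on vectors from the right and on covectors from the left, so that expressions such as $\boldsymbol{\Omega}(\bY,\bY^*)^{-1}\bY_i$ and $\bY^*_i\boldsymbol{\Omega}(\bY,\bY^*)^{-1}$ remain well-defined vector- and covector-valued objects and the tensor product $\bY_i\otimes T_i\bY^*_i$ is always treated as a matrix acting on the correct side. With this care taken, each identity from the commutative proofs in \cite{MDS,TQL,MM} transfers unchanged, with the noncommutative scalars placed in the order dictated by the linear systems themselves.
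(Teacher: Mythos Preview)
Your proposal is correct and follows exactly the approach the paper itself indicates: the paper does not give a detailed proof of this proposition but states at the start of Section~\ref{sec:FT} that ``the proofs are by direct verification along lines given in \cite{MDS,TQL,MM}''. Your outline---based on the discrete Leibniz rule, the noncommutative inverse-difference identity for $\boldsymbol{\Omega}(\bY,\bY^*)^{-1}$, and checking the primed linear systems one at a time---is precisely that direct verification, and in fact supplies more detail than the paper does.
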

\begin{Rem}
We would like to mention that the above formulas can be put into a form using
the so called quasideterminants \cite{GGRW-quasideterminants}, like it was done, 
for example, in \cite{LiNimmo} for a non-Abelian Toda lattice.
\end{Rem}
\begin{Rem}
As it was shown in \cite{TQL} for the commutative case,
other Darboux-type transformations of the
quadrilateral lattice, like the Laplace, Combescure,
L\'{e}vy, adjoint L\'{e}vy or the radial transformations, can  
be obtained as reductions of the fundamental transformation. There are no
obstructions which would prevent the geometric reasoning applied in \cite{TQL}
to transfer such a statement to the noncommutative case. 
\end{Rem}

The vectorial
fundamental transformation can be considered as superposition of
$\dim\mathbb{V}$ (scalar) fundamental transformations; on intermediate stages
the rest of the transformation data should be suitably transformed as well.
Such a description contains already the principle of permutability of such
transformations, which follows from the following observation~\cite{TQL}.
\begin{Prop}
Assume the following splitting of the data of the vectorial fundamental
transformation
\begin{equation}
\boldsymbol{Y}_i = \left( \begin{array}{c} 
\boldsymbol{Y}_i^a \\ \boldsymbol{Y}_i^b \end{array} \right),\qquad
\boldsymbol{Y}_i^* = \left( \begin{array}{cc} 
\boldsymbol{Y}_{ai}^{*}\; & \boldsymbol{Y}_{b i}^{*} \end{array} \right),
\end{equation}
associated with the partition $\DD^K = \DD^{K_a} \oplus \DD^{K_b}$,
which implies the following splitting of the potentials
\begin{equation} \label{eq:split-fund-1}
\boldsymbol{\Omega}(\boldsymbol{Y},H) =  \left( \begin{array}{c} 
\boldsymbol{\Omega}(\boldsymbol{Y}^a,H) \\ 
\boldsymbol{\Omega}(\boldsymbol{Y}^b,H) \end{array} \right), \qquad
\boldsymbol{\Omega}(\boldsymbol{Y},\boldsymbol{Y}^*) = \left( \begin{array}{cc}
\boldsymbol{\Omega}(\boldsymbol{Y}^a,\boldsymbol{Y}_a^*) &
\boldsymbol{\Omega}(\boldsymbol{Y}^a,\boldsymbol{Y}_b^*) \\
\boldsymbol{\Omega}(\boldsymbol{Y}^b,\boldsymbol{Y}_a^*) &
\boldsymbol{\Omega}(\boldsymbol{Y}^b,\boldsymbol{Y}_b^*)\end{array} \right), 
\end{equation}
\begin{equation} \label{eq:split-fund-2}
\boldsymbol{\Omega}(\boldsymbol{X},\boldsymbol{Y}^*) = \left( \begin{array}{cc} 
\boldsymbol{\Omega}(\boldsymbol{X},\boldsymbol{Y}_a^*)\; , &
\boldsymbol{\Omega}(\boldsymbol{X},\boldsymbol{Y}_b^*)\end{array} \right).
\end{equation} 
Then the vectorial fundamental transformation is equivalent to the following
superposition of vectorial fundamental transformations:\\
1) Transformation $\bx\to\bx^{\{a\}}$ with the data 
$\boldsymbol{Y}_i^a$, $\boldsymbol{Y}_{ai}^*$ and the corresponding
potentials
$\boldsymbol{\Omega}(\boldsymbol{Y}^a,H)$, 
$\boldsymbol{\Omega}(\boldsymbol{Y}^a,\boldsymbol{Y}_a^*)$, 
$\boldsymbol{\Omega}(\boldsymbol{X},\boldsymbol{Y}_a^*)$
\begin{align}
\label{eq:fund-vect-a}
\bx^{\{a\}}  & = \bx - 
\boldsymbol{\Omega}(\boldsymbol{X},\boldsymbol{Y}^*_a)
\boldsymbol{\Omega}(\boldsymbol{Y}^a,\boldsymbol{Y}^*_a)^{-1}
\boldsymbol{\Omega}(\boldsymbol{Y}^a,H),\\
\boldsymbol{X}_i^{\{a\}} & = \boldsymbol{X}_i -
\boldsymbol{\Omega}(\boldsymbol{X},\boldsymbol{Y}^*_a)
\boldsymbol{\Omega}(\boldsymbol{Y}^a,\boldsymbol{Y}^*_a)^{-1}
\boldsymbol{Y}^a_i,
\\
H_i^{\{a\}} & = H_i - \boldsymbol{Y}^*_{i a}
\boldsymbol{\Omega}(\boldsymbol{Y}^a,\boldsymbol{Y}^*_a)^{-1}
\boldsymbol{\Omega}(\boldsymbol{Y}^a,H).
\end{align}
2) Application on the result the vectorial fundamental transformation with the
transformed data
\begin{align}
{\boldsymbol{Y}}_i^{b\{a\}} & = \boldsymbol{Y}_i^b -
\boldsymbol{\Omega}(\boldsymbol{Y}^b,\boldsymbol{Y}^*_a)
\boldsymbol{\Omega}(\boldsymbol{Y}^a,\boldsymbol{Y}^*_a)^{-1}
\boldsymbol{Y}^a_i,
\\
{\boldsymbol{Y}}_{i b}^{*\{a\}} & = \boldsymbol{Y}_{i b}^* - 
\boldsymbol{Y}^*_{i a}
\boldsymbol{\Omega}(\boldsymbol{Y}^a,\boldsymbol{Y}^*_a)^{-1}
\boldsymbol{\Omega}(\boldsymbol{Y}^a, \boldsymbol{Y}_{b}^*),
\end{align}
and potentials
\begin{align} 
{\boldsymbol{\Omega}}(\boldsymbol{Y}^b,H)^{\{a\}} & =
\boldsymbol{\Omega}(\boldsymbol{Y}^b,H) - 
\boldsymbol{\Omega}(\boldsymbol{Y}^b,\boldsymbol{Y}^*_a)
\boldsymbol{\Omega}(\boldsymbol{Y}^a,\boldsymbol{Y}^*_a)^{-1}
\boldsymbol{\Omega}(\boldsymbol{Y}^a,H)=
\boldsymbol{\Omega}({\boldsymbol{Y}}^{b\{a\}},H^{\{a\}}),
\\
{\boldsymbol{\Omega}}(\boldsymbol{Y}^b,\boldsymbol{Y}^*_b)^{\{a\}} & =
\boldsymbol{\Omega}(\boldsymbol{Y}^b,\boldsymbol{Y}^*_b) - 
\boldsymbol{\Omega}(\boldsymbol{Y}^b,\boldsymbol{Y}^*_a)
\boldsymbol{\Omega}(\boldsymbol{Y}^a,\boldsymbol{Y}^*_a)^{-1}
\boldsymbol{\Omega}(\boldsymbol{Y}^a,\boldsymbol{Y}^*_b)=
\boldsymbol{\Omega}({\boldsymbol{Y}}^{b\{a\}},{\boldsymbol{Y}}_b^{*\{a\}}),
\\
{\boldsymbol{\Omega}}(\boldsymbol{X},\boldsymbol{Y}^*_b)^{\{a\}}  & =
\boldsymbol{\Omega}(\boldsymbol{X},\boldsymbol{Y}^*_b) - 
\boldsymbol{\Omega}(\boldsymbol{X},\boldsymbol{Y}^*_a)
\boldsymbol{\Omega}(\boldsymbol{Y}^a,\boldsymbol{Y}^*_a)^{-1}
\boldsymbol{\Omega}(\boldsymbol{Y}^a,\boldsymbol{Y}^*_b)=
\boldsymbol{\Omega}({\boldsymbol{X}}^{\{a\}},{\boldsymbol{Y}}_b^{*\{a\}}),
\label{eq:fund-vect-potentials-slit}
\end{align}
i.e.,
\begin{equation} \label{eq:fund-vect-a-b}
\bx^\prime = \bx^{\{a,b\}}  = \bx^{\{a\}} - 
{\boldsymbol{\Omega}}(\boldsymbol{X},\boldsymbol{Y}^*_b)^{\{a\}}
[\boldsymbol{\Omega}(\boldsymbol{Y}^b,\boldsymbol{Y}^*_b)^{\{a\}}]^{-1}
{\boldsymbol{\Omega}}(\boldsymbol{Y}^b,H)^{\{a\}}.
\end{equation}
\end{Prop}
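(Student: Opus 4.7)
The plan is to recognize the proposition as an instance of the Schur complement formula for the inverse of a block matrix, applied to $\boldsymbol{\Omega}(\boldsymbol{Y},\boldsymbol{Y}^*)$ viewed as a $2\times 2$ block with respect to the decomposition $\DD^K=\DD^{K_a}\oplus\DD^{K_b}$. Write $A=\boldsymbol{\Omega}(\boldsymbol{Y}^a,\boldsymbol{Y}_a^*)$, $B=\boldsymbol{\Omega}(\boldsymbol{Y}^a,\boldsymbol{Y}_b^*)$, $C=\boldsymbol{\Omega}(\boldsymbol{Y}^b,\boldsymbol{Y}_a^*)$, $D=\boldsymbol{\Omega}(\boldsymbol{Y}^b,\boldsymbol{Y}_b^*)$ and $S=D-CA^{-1}B$. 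The purely algebraic identity
\begin{equation*}
\begin{pmatrix} A & B \\ C & D \end{pmatrix}^{-1}
= \begin{pmatrix} A^{-1}+A^{-1}BS^{-1}CA^{-1} & -A^{-1}BS^{-1}\\ -S^{-1}CA^{-1} & S^{-1} \end{pmatrix}
\end{equation*}
holds over any associative ring in which the displayed inverses exist, hence in particular over $\DD$ despite its noncommutativity. By construction $S=\boldsymbol{\Omega}(\boldsymbol{Y}^b,\boldsymbol{Y}^*_b)^{\{a\}}$, which already identifies the central factor of the two-step formula.

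To establish the composed formula \eqref{eq:fund-vect-a-b}, I would insert this block inverse into \eqref{eq:fund-vect} and expand the product of the row-block $\boldsymbol{\Omega}(\boldsymbol{X},\boldsymbol{Y}^*)$, the $2\times 2$ inverse, and the column-block $\boldsymbol{\Omega}(\boldsymbol{Y},H)$. Collecting the five resulting terms, the diagonal contribution $\boldsymbol{\Omega}(\boldsymbol{X},\boldsymbol{Y}_a^*)A^{-1}\boldsymbol{\Omega}(\boldsymbol{Y}^a,H)$ is exactly $\bx-\bx^{\{a\}}$ by \eqref{eq:fund-vect-a}, while the remaining four terms factor through $S^{-1}$ into the single product $\boldsymbol{\Omega}(\boldsymbol{X},\boldsymbol{Y}^*_b)^{\{a\}}\,S^{-1}\,\boldsymbol{\Omega}(\boldsymbol{Y}^b,H)^{\{a\}}$, by means of the identifications $\boldsymbol{\Omega}(\boldsymbol{X},\boldsymbol{Y}_a^*)A^{-1}B-\boldsymbol{\Omega}(\boldsymbol{X},\boldsymbol{Y}_b^*)=-\boldsymbol{\Omega}(\boldsymbol{X},\boldsymbol{Y}^*_b)^{\{a\}}$ and $CA^{-1}\boldsymbol{\Omega}(\boldsymbol{Y}^a,H)-\boldsymbol{\Omega}(\boldsymbol{Y}^b,H)=-\boldsymbol{\Omega}(\boldsymbol{Y}^b,H)^{\{a\}}$. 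Summing these two contributions reproduces \eqref{eq:fund-vect-a-b} on the nose.

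The three accompanying identifications of transformed potentials with potentials of transformed data (the second equality signs in the displays for $\boldsymbol{\Omega}(\boldsymbol{Y}^b,H)^{\{a\}}$, $\boldsymbol{\Omega}(\boldsymbol{Y}^b,\boldsymbol{Y}^*_b)^{\{a\}}$ and $\boldsymbol{\Omega}(\boldsymbol{X},\boldsymbol{Y}^*_b)^{\{a\}}$) would be proved by showing that both sides satisfy the same discrete defining relation from \eqref{eq:Omega-Y-Y}--\eqref{eq:Omega-X-Y} and agree at one chosen base point. Concretely, for $\boldsymbol{\Omega}(\boldsymbol{Y}^b,H)^{\{a\}}$ one applies $\Delta_i$ to its defining right-hand side, uses the discrete Leibniz rule $\Delta_i(fg)=(\Delta_i f)(T_i g)+f\,\Delta_i g$ together with the induced identity $\Delta_i M^{-1}=-M^{-1}(\Delta_i M)(T_i M)^{-1}$, and substitutes the defining relations for each $\boldsymbol{\Omega}$ that appears; after simplification the result factors as $\boldsymbol{Y}^{b\{a\}}_i\otimes T_i H^{\{a\}}_i$, which is precisely the defining relation for the claimed right-hand side. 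The remaining two identities follow by the same mechanism.

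The principal obstacle is bookkeeping: noncommutativity forbids cycling factors past inverses, so every step in both the Schur complement expansion and the Leibniz-rule computations must preserve the order of factors. There is, however, no deeper difficulty, because vectors from $\DD^K$ multiply scalars from the right while covectors act from the left, so the tensor products in \eqref{eq:Omega-Y-Y}--\eqref{eq:Omega-X-Y} and the block matrix products are unambiguous. Consequently, the commutative proof of \cite{TQL} transports verbatim once the ordering of factors is preserved throughout, exactly in the spirit announced at the beginning of Section \ref{sec:FT}.
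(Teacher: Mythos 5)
Your proposal is correct and takes essentially the same route as the paper: the paper inserts the block (Schur--complement) inverse of $\boldsymbol{\Omega}(\boldsymbol{Y},\boldsymbol{Y}^*)$, obtained there through a block triangular factorization that is equivalent to your explicit block-inverse formula, into \eqref{eq:fund-vect} and regroups the terms exactly as you do, the noncommutativity of $\DD$ causing no obstruction since only the order of factors must be respected. The only minor difference is that for the identifications of the transformed potentials the paper simply invokes Proposition~\ref{prop:FT}, whereas you verify the defining difference relations \eqref{eq:Omega-Y-Y}--\eqref{eq:Omega-X-Y} directly via the discrete Leibniz rule; both are valid.
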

\begin{proof} The transformation rules for the intermediate data and potentials
are consequence of proposition~\ref{prop:FT}.
Denote 
\begin{equation*}
\boldsymbol{\Omega}(\boldsymbol{Y},\boldsymbol{Y}^*) = 
\boldsymbol{\Omega}=
\left( \begin{array}{cc}
\boldsymbol{\Omega}^a_a &
\boldsymbol{\Omega}^a_b \\
\boldsymbol{\Omega}^b_a &
\boldsymbol{\Omega}^b_b\end{array} \right), \qquad
\boldsymbol{\Omega}(\boldsymbol{Y},H) =  \left( \begin{array}{c} 
\boldsymbol{\Omega}^a \\ 
\boldsymbol{\Omega}^b \end{array} \right), \qquad
\boldsymbol{\Omega}(\boldsymbol{X},\boldsymbol{Y}^*) = \left( \begin{array}{cc} 
\boldsymbol{\Omega}_a, &
\boldsymbol{\Omega}_b\end{array} \right),
\end{equation*}
and notice that
\begin{equation}
\boldsymbol{\Omega}=
\left( \begin{array}{cc}
{1}_a & 0 \\
\boldsymbol{\Omega}^b_a (\boldsymbol{\Omega}^a_a)^{-1} &
1_b\end{array} \right)
\left( \begin{array}{cc}
\boldsymbol{\Omega}^a_a &
\boldsymbol{\Omega}^a_b \\
0 &
(\boldsymbol{\Omega}^b_b)^{\{ a\}} \end{array} \right),
\end{equation}
which gives
\begin{equation}
\boldsymbol{\Omega}^{-1}=
\left( \begin{array}{cc}
(\boldsymbol{\Omega}^a_a)^{-1} &
- (\boldsymbol{\Omega}^a_a)^{-1} \boldsymbol{\Omega}^a_b 
((\boldsymbol{\Omega}^b_b)^{\{ a\}})^{-1}\\
0 &
((\boldsymbol{\Omega}^b_b)^{\{ a\}})^{-1} \end{array} \right)
\left( \begin{array}{cc}
{1}_a & 0 \\
-\boldsymbol{\Omega}^b_a (\boldsymbol{\Omega}^a_a)^{-1} &
1_b\end{array} \right).
\end{equation}
Inserting such $\boldsymbol{\Omega}^{-1}$ into formula \eqref{eq:fund-vect} we
obtain
\begin{equation}
\bx^\prime = \bx - \left( \begin{array}{cc} 
\boldsymbol{\Omega}_a (\boldsymbol{\Omega}^a_a)^{-1} ,&
(\boldsymbol{\Omega}_b)^{\{ a\}} ((\boldsymbol{\Omega}^b_b)^{\{ a\}})^{-1} 
\end{array} \right)
\left( \begin{array}{c} 
\boldsymbol{\Omega}^a \\ 
(\boldsymbol{\Omega}^b)^{\{ a\}} \end{array} \right),
\end{equation}
thus equation \eqref{eq:fund-vect-a-b}.
\end{proof}
\begin{Rem}
The same result $\bx^\prime = \bx^{\{a,b\}}={\bx}^{\{b,a\}}$
is obtained exchanging the order of transformations, exchanging also the indices
$a$ and $b$ in formulas 
\eqref{eq:fund-vect-a}-\eqref{eq:fund-vect-a-b}.
\end{Rem}
\begin{Rem}
The scalar, i.e. $K=1$, fundamental transformation preserves in the
noncommutative case its geometric meaning as a transformation between two
quadrilateral lattices such that $x$, $x^\prime$ $T_i x$, $T_i x^\prime$ are
coplanar. Therefore also in the noncommutative case
the fundamental transformation can be considered as a construction of a new
level (in the new dimension direction) of the quadrilateral lattice. In
particular, in the case $K=2$, $K_a = K_b = 1$, any point $x$ of the lattice and
its transforms $x^{\{ a\} }$,   $x^{\{ b\} }$ and  $x^{\{ a,b\} }$ are coplanar. 
\end{Rem}

\section{The B-(Moutard) quadrilateral lattice}
\label{sec:BC}
We will concentrate below on the B-(Moutard)
quadrilateral lattice which provides geometric
interpretation of the discrete BKP equations. We will study implications of 
the corresponding additional (apart
from the Geometric Integrability Scheme) 
incidence geometric structures, which assure integrability of the above mentioned
reduction, on the possibility of deriving their noncommutative versions. The
main result of this Section is that the multidimensional consistency of the
reduction holds if and only if the division ring is commutative. 

In the geometric
considerations below we assume generality of configurations, i.e., only those
explicitly stated (and their consequences) hold. In particular, the subspace
\begin{equation*}
\langle x_{0}, x_{1}, x_{2}, x_{3}, x_{4} \rangle 
= \langle x_{1234}, x_{123}, x_{124}, x_{134}, x_{234} \rangle 
\end{equation*}
of the hypercube in the
Four Dimensional Consistency of the Geometric Integrability Scheme has dimension
four. 

\subsection{The B-quadrilateral lattice}
The B-quadrilateral lattice was defined geometrically in \cite{BQL} in the
commutative case (we consider for a moment the projective space over a (commutative)
field $\mathbb{F}$) as follows.
\begin{Def} \label{def:BQL}
A quadrilateral lattice $x:\ZZ^N\to\PP^M(\mathbb{F})$ is called the 
\emph{B-quadrilateral
lattice} if for any triple of different indices $i,j,k$
the points $x$, $T_i T_j x$, $T_i T_k x$ and $T_j T_k x$ are coplanar.
\end{Def}
In \cite{BQL} it was also
shown that the homogeneous coordinates $\bx:\ZZ^N\to \mathbb{F}^{M+1}_*$ satisfy
(in appropriate gauge) the 
system of discrete Moutard equations \cite{DJM-D5,NiSchief} 
\begin{equation} \label{eq:BKP-lin}
T_i T_j \bx - \bx = \frac{(T_i \tau)T_j\tau}{\tau T_i T_j\tau}
(T_i \bx - T_i \bx) , \quad 1\leq i< j\leq N,
\end{equation}
where the $\tau$-function above is the square root of the $\tau$-function of the
quadrilateral lattice mentioned in the last remark of section~\ref{sec:tQL}.
The compatibility condition of the linear system \eqref{eq:BKP-lin} is 
Miwa's discrete BKP system \cite{Miwa}
\begin{equation} \label{eq:BKP-nlin}
\tau\, T_i T_j T_k \tau= (T_i T_j \tau)T_k\tau - (T_i T_k\tau)T_j\tau + 
(T_j T_k \tau)T_i\tau, \quad 1\leq i< j < k \leq N.
\end{equation}

Because the B-reduction condition is imposed on the elementary hexahedra
level, to show integrability of the B-quadrilateral lattice
it is important to check its four dimensional compatibility with the
Geometric Integrability Scheme. The four dimensional consistence of the 
BQL-constraint was proved algebraically
in \cite{BQL} in the commutative case.
We will show geometrically that, in contrary to the
quadrilateral lattice case, one cannot obtain directly the noncommutative
integrable B-quadrilateral lattice.
\begin{Th}
Multidimensional consistency of the B-quadrilateral lattice constraint holds if
and only if the division ring $\mathbb{D}$ is commutative.
\end{Th}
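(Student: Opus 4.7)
The plan is to translate the B-coplanarity condition on each elementary hexahedron into an algebraic cyclic relation among the Laplace coefficients $A_{ij}$, impose that relation on all four coordinate hexahedra of a four-hypercube based at $\bx_0$, and then track, via the discrete Darboux equations \eqref{eq:MQL-A}, what it forces on the four dual hexahedra meeting at $\bx_{1234}$. Since the converse direction, commutative $\DD$ implies four-dimensional consistency, is already covered by the algebraic proof in \cite{BQL}, only the ``only if'' direction requires new work. I would begin by setting $\boldsymbol{v}_l = T_l\bx - \bx$ so that, by \eqref{eq:Laplace}, $T_lT_m\bx - \bx = \boldsymbol{v}_l A_{lm} + \boldsymbol{v}_m A_{ml}$. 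Coplanarity of $\bx$, $T_iT_j\bx$, $T_iT_k\bx$, $T_jT_k\bx$ amounts to the vanishing of some nontrivial right-linear combination of the three differences $T_iT_j\bx-\bx$, $T_iT_k\bx-\bx$, $T_jT_k\bx-\bx$; generic right-linear independence of $\boldsymbol{v}_i,\boldsymbol{v}_j,\boldsymbol{v}_k$ then collapses this condition to the cyclic identity
\begin{equation*}
A_{ki}^{-1}A_{kj}A_{jk}^{-1}A_{ji}A_{ij}^{-1}A_{ik} = -1,
\end{equation*}
which is the noncommutative avatar of the BQL identity derived in \cite{BQL}.

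Next, I would impose this cyclic relation on the four coordinate hexahedra $\langle \bx_0,\bx_i,\bx_j,\bx_k\rangle$ with $\{i,j,k\}\subset\{1,2,3,4\}$, and use \eqref{eq:MQL-A} to express the Laplace coefficients on each of the four dual hexahedra (meeting at $\bx_{1234}$) in terms of the data on the initial ones. The four-dimensional consistency of the B-reduction is exactly the statement that the same cyclic identity then holds automatically on each dual hexahedron. In the commutative case, after substitution the four dual identities collapse to the four initial ones modulo \eqref{eq:MQL-A} and the compatibility is tautological. In the noncommutative setting the same substitution, once all cancellations forced by \eqref{eq:MQL-A} have been performed, leaves a residual identity of the form $PQ = QP$ in $\DD$, where $P$ and $Q$ are specific rational monomials in the $A_{ij}$ supported on the initial four cubes. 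The main obstacle is the noncommutative bookkeeping: the order of factors must be preserved through every substitution, and one must verify that the residual is not automatically trivial.

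Because the initial data---the five points $\bx_0,\bx_1,\dots,\bx_4$ in general position together with the six auxiliary points $\bx_{ij}$ on the coordinate planes---can be chosen freely, the monomials $P$ and $Q$ can be tuned to realize arbitrary pairs of nonzero elements of $\DD$, so $PQ=QP$ forces every pair of elements of $\DD$ to commute and $\DD$ is a field. Together with \cite{BQL} this gives the theorem. A conceptually cleaner route would be to identify the four-dimensional consistency of the BQL reduction with the validity of a Pappus- or M\"obius-type incidence configuration on the hypercube, which by the classical coordinatization theorems recalled in Section~1.2 is equivalent to commutativity of the underlying division ring.
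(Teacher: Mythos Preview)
Your proposal is a plausible strategy but not a proof: the two load-bearing assertions---that after substituting the MQL relations~\eqref{eq:MQL-A} into the four dual cyclic identities the residual is precisely a commutator $PQ=QP$, and that the surviving free parameters in the initial data (already constrained by the four \emph{initial} cyclic identities) suffice to make $P$ and $Q$ range over all nonzero pairs in $\DD$---are stated, not established. The bookkeeping you flag as ``the main obstacle'' is in fact the entire content of the argument, and the second assertion is not obvious: the $A_{ij}$ on the initial cubes are not free once the B-relations $A_{ki}^{-1}A_{kj}A_{jk}^{-1}A_{ji}A_{ij}^{-1}A_{ik}=-1$ are imposed, so you must check that enough freedom remains.

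The paper sidesteps this computation entirely by a two-step reduction. First, a purely incidence-geometric lemma (Lemma~\ref{lem:BQL-1}) shows that four-dimensional consistency of the B-constraint on the hypercube is \emph{equivalent} to the single-cube dual condition that $T_ix,\,T_jx,\,T_kx,\,T_iT_jT_kx$ are coplanar; the proof is by intersecting three-dimensional subspaces inside the four-dimensional span of the hypercube and uses no algebra at all. Second, an elementary gauge computation on one hexahedron (Lemma~\ref{lem:BQL-comm}) shows that, given the B-condition $x_0,x_{12},x_{13},x_{23}$ coplanar, the dual condition $x_1,x_2,x_3,x_{123}$ coplanar holds if and only if $ab=ba$, where $a,b\in\DD$ are the two genuinely free parameters left after normalizing homogeneous coordinates. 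Because $a,b$ are manifestly arbitrary, commutativity of $\DD$ follows. This localizes everything to a single cube and replaces your hypercube-wide propagation through \eqref{eq:MQL-A} with a short linear-algebra calculation.

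Your closing remark is well aimed: the paper notes that Lemma~\ref{lem:BQL-comm} is equivalent, via quadrangular sets, to the Pappus configuration, which by the coordinatization theorems of Section~1.2 characterizes commutativity---so the ``conceptually cleaner route'' you gesture at is exactly the geometric alternative the paper records.
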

\begin{proof}
It is an immediate consequence of two Lemmas below.
\end{proof}
\begin{Lem} \label{lem:BQL-1}
The B-constraint is multidimensionally consistent if and only if 
for any triple of different indices $i,j,k$
the points $ T_ix$, $ T_j x$, $T_k x$ and $T_i T_j T_k x$ are coplanar
as well.
\end{Lem}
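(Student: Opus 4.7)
I plan to prove the equivalence direction by direction using a dimension count in a fixed 4-dimensional hypercube, then extend by translation invariance. Fix the hypercube at base $x_0$ with directions $\{1,\dots,4\}$ inside $V:=\langle x_0,x_1,\dots,x_4\rangle$ (4-dimensional by the generality assumption stated at the start of the section), and introduce the 3-dimensional subspaces $V':=\langle x_1,x_2,x_3,x_4\rangle$ and, for each $i$, $H_i:=\langle x_i,x_{ij},x_{ik},x_{il}\rangle$ (where $\{j,k,l\}=\{1,\dots,4\}\setminus\{i\}$). Then $H_i\cap V'$ is 2-dimensional, and the only preliminary consequence of the Geometric Integrability Scheme I will use is that $x_{ijk}\in\langle x_i,x_{ij},x_{ik}\rangle\subset H_i$ for every triple containing $i$ (this is one of the three defining planes whose intersection is $x_{ijk}$).

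For the sufficiency direction I assume the odd-vertex coplanarity at every hexahedron. Applied at $x_0$, it gives $x_{jkl}\in\langle x_j,x_k,x_l\rangle\subset V'$ for each triple; combined with the preliminary observation, each of $x_{ijk}$, $x_{ijl}$, $x_{ikl}$ lies in both $H_i$ and $V'$, hence in the 2-plane $H_i\cap V'$, which also contains $x_i$. Four points in a 2-plane are coplanar---and this is precisely the B-constraint at the vertex $x_i$ required for the 4-dimensional consistency of the reduction. Translation invariance then propagates the conclusion to every 4-cube of the lattice, giving multidimensional consistency.

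For the necessity direction I will reverse the argument. Multidimensional consistency yields, at the 4-cube, the 2-plane $\Pi_i:=\langle x_i,x_{ijk},x_{ijl},x_{ikl}\rangle\subset H_i$, and I must show $\Pi_i\subset V'$, for then $x_{ijk}\in V'$ gives the odd-vertex coplanarity for each triple containing $i$. Since both $\Pi_i$ and $H_i\cap V'$ are 2-planes in the 3-space $H_i$ passing through $x_i$, the inclusion $\Pi_i\subset V'$ is equivalent to the equality $\Pi_i=H_i\cap V'$. To force this inclusion I would appeal to the 4-dimensional consistency at the translated 4-cube one step backwards in direction $l$: the B-constraints at vertices of that translated cube give additional coplanarity relations among the $x_{ijk}$'s and the newly introduced translates, which combined with the uniqueness of the $x_{1234}$-construction pin $\Pi_i$ inside $V'$.

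The main obstacle is this necessity direction. The odd-vertex coplanarity is a 3-dimensional property of a single elementary hexahedron, whereas multidimensional consistency is a 4-dimensional assumption, so no argument confined to one hexahedron suffices; one must genuinely use an auxiliary 4-cube obtained by translation (equivalently, exploit the antipodal symmetry that swaps $x_0\leftrightarrow x_{1234}$ and $x_{ij}\leftrightarrow x_{kl}$, under which the B-constraint at one corner corresponds to the odd-vertex condition at the opposite corner). The sufficiency direction, by contrast, collapses to a one-line dimension count once $V'$ and $H_i\cap V'$ are in place.
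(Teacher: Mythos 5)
Your sufficiency half (odd-vertex coplanarity at the hexahedra through $x_0$ implies the B-constraint at the ``final'' hexahedra of the hypercube) is correct and is essentially the paper's own argument: the paper intersects $\langle x_1,x_2,x_3,x_4\rangle$ with $\langle x_1,x_{12},x_{13},x_{14}\rangle$ exactly as you intersect $V'$ with $H_i$ (the paper additionally justifies that the two $3$-spaces are distinct, since $H_1$ cannot contain $x_{234}$, where you merely assert that $H_i\cap V'$ is a plane).

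The necessity half, however, contains a genuine gap. You correctly extract from consistency the planes $\Pi_i=\langle x_i,x_{ijk},x_{ijl},x_{ikl}\rangle\subset H_i$, but the step that is actually the content of this direction --- forcing the odd-vertex condition --- is only gestured at: ``appeal to the $4$-dimensional consistency at the translated $4$-cube one step backwards in direction $l$ \dots combined with the uniqueness of the $x_{1234}$-construction'' exhibits no concrete coplanarity relations and does not pin $\Pi_i$ inside $V'$. Moreover, the obstacle you describe is illusory: no auxiliary translated cube is needed, and the paper proves this direction by the same single-hypercube dimension count with the roles of even and odd vertices exchanged. Indeed, consistency means the B-constraint holds at the final hexahedra, i.e. $x_1\in\langle x_{123},x_{124},x_{134}\rangle$, $x_2\in\langle x_{123},x_{124},x_{234}\rangle$, $x_3\in\langle x_{123},x_{134},x_{234}\rangle$ and $x_4\in\langle x_{124},x_{134},x_{234}\rangle$, so that $x_1,x_2,x_3,x_4$ all lie in the $3$-space $W=\langle x_{123},x_{124},x_{134},x_{234}\rangle$; intersecting $W$ with the $3$-space $\langle x_{123},x_{12},x_{13},x_{23}\rangle$ of the initial hexahedron (which contains $x_1,x_2,x_3$ by the Geometric Integrability Scheme and not $x_4$, so the two spaces are distinct and meet in a plane of the $4$-space) yields a plane through $x_1,x_2,x_3,x_{123}$, which is the odd-vertex condition. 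Note finally that even had you established $\Pi_i\subset V'$, you would still owe a further (easy) step: $x_{ijk}\in V'$ only gives $x_{ijk}\in\langle x_i,x_j,x_k\rangle$ after intersecting $V'$ with $\langle x_0,x_i,x_j,x_k\rangle$, a point your sketch leaves implicit.
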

\begin{Lem} \label{lem:BQL-comm}
Under hypotheses of the Geometric Integrability Scheme, 
assume that $x_0$, $x_{12}$, $x_{13}$ and $x_{23}$ are coplanar.
Then the following is true: $\DD$ is \emph{commutative} 
(hence a field) if and only if the points $x_1$, $x_2$, $x_3$ and
$x_{123}$ are coplanar as well (see Figure~\ref{fig:BQLconditon}).
\end{Lem}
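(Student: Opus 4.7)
The plan is to reduce both coplanarity conditions to explicit algebraic relations in the Laplace coefficients of the configuration, and then to argue that the coincidence of these relations forces $\DD$ to be commutative.

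First I would place $\bx_0$ at the origin and take $\mathbf{e}_i := \bx_i - \bx_0$, for $i=1,2,3$, as a right basis of the ambient $\DD$-module, writing $\bx_{ij} = \mathbf{e}_i A_{ij} + \mathbf{e}_j A_{ji}$ in accord with \eqref{eq:Laplace}. Exploiting the gauge freedom of rescaling each $\mathbf{e}_i$ from the right, I would normalize $A_{12} = A_{21} = A_{31} = 1$ and abbreviate $B = A_{13}$, $C = A_{23}$, $D = A_{32}$. The hypothesis that $x_0$, $x_{12}$, $x_{13}$, $x_{23}$ are coplanar amounts to right-linear dependence of the three vectors $\bx_{12}$, $\bx_{13}$, $\bx_{23}$; matching coefficients of $\mathbf{e}_i$ yields the single relation $C + BD = 0$.

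Next I would compute $\bx_{123}$ as the unique intersection of the three planes $\langle x_k, x_{ik}, x_{jk}\rangle$ of the Geometric Integrability Scheme. Parametrizing each plane by right-scalar affine combinations of its three defining points and equating coordinates in the basis $\{\mathbf{e}_i\}$, a routine elimination produces $\bx_{123} - \bx_0 = \mathbf{e}_1 \xi_1 + \mathbf{e}_2 \xi_2 + \mathbf{e}_3 \xi_3$ with rational expressions $\xi_i$ in $B$, $C$, $D$. The conclusion coplanarity of $x_1$, $x_2$, $x_3$, $x_{123}$ then becomes the affine-combination identity $\xi_1 + \xi_2 + \xi_3 = 1$.

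The heart of the proof is the comparison of this new identity with the hypothesis $C + BD = 0$. In the commutative case they reduce to the same relation, recovering the content of \cite{BQL}. In the noncommutative case, the ordering of products forced by the right-module structure leaves a residual discrepancy that should be expressible as the vanishing of a commutator of rational expressions in $B$, $C$, $D$. Since these parameters may be chosen essentially arbitrarily subject to the hypothesis, the resulting commutator can be made to involve any pair of elements of $\DD$, so its vanishing forces $\DD$ to be commutative; the converse implication is immediate from \cite{BQL}.

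The main obstacle is the careful noncommutative bookkeeping: the inverses $A_{ji}^{-1}$ appearing in the elimination cannot be transported past the remaining factors, and the simplification of the $\xi_i$ must preserve the order of every multiplication. Once the closed form of $\xi_1 + \xi_2 + \xi_3 - 1$ is obtained, isolating the obstruction commutator is a purely algebraic manipulation.
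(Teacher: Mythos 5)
Your reduction fails at the very first step: in the affine picture the Laplace coefficients are not gauge quantities. The points $x_1,x_2,x_3$ are part of the data of the lemma (they enter both the construction of $x_{123}$ and the target plane $\langle x_1,x_2,x_3\rangle$), so the vectors $\mathbf{e}_i=\bx_i-\bx_0$ cannot be rescaled; the coefficients $A_{ij}$ are affine invariants of the configuration. Imposing $A_{12}=A_{21}=A_{31}=1$ therefore does not normalize a general configuration but selects a special subfamily (for instance $A_{12}=A_{21}=1$ forces $x_0x_1x_{12}x_2$ to be a parallelogram), and this subfamily is exactly the wrong one for your purpose: carrying out your own elimination there gives, over \emph{any} division ring,
\[
\bx_{123}-\bx_0=\mathbf{e}_1\bigl(1+(B-1)D\bigr)+\mathbf{e}_2\,C+\mathbf{e}_3\,D ,
\]
so the conclusion $\xi_1+\xi_2+\xi_3=1$ reads $C+BD=0$, i.e.\ it coincides identically with your hypothesis, and no residual commutator appears at all. (Equivalently, translating your family into the paper's projective gauge one finds $a=1$, for which the obstruction $ab=ba$ is vacuous.) Hence the direction ``coplanarity forces commutativity'' --- the substance of the lemma --- cannot be extracted from this family; the noncommutative obstruction lives precisely in the invariants you normalized away. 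This is not a repairable bookkeeping slip: followed literally, your scheme would ``prove'' that the B-constraint is consistent over every division ring, contradicting the statement.

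The legitimate gauge freedom is the projective one exploited in the paper: each point's homogeneous coordinates may be rescaled, $\bx\mapsto\bx\lambda$, which allows $\bx_{12}=\bx_0+\bx_1+\bx_2$ and $\bx_{13}=\bx_0+(\bx_1+\bx_3)a$, but then only $\bx_{23}$ remains rescalable, leaving $\bx_{23}=\bx_0+\bx_2 b+\bx_3 c$ with the hypothesis giving $c=-b$; the condition $x_{123}\in\langle x_1,x_2,x_3\rangle$ then reduces to a linear system admitting a nontrivial solution iff $ab=ba$, and since $a,b$ are essentially arbitrary this yields both implications. If you prefer the affine description you must keep all six $A_{ij}$ (at least two independent invariants must survive) and carry them through the elimination. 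Note finally that even within your framework the decisive step --- exhibiting the obstruction as a commutator of essentially arbitrary elements --- is only asserted (``should be expressible''), and as shown above that assertion is false in your normalized coordinates.
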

\begin{figure}
\begin{center}
\includegraphics{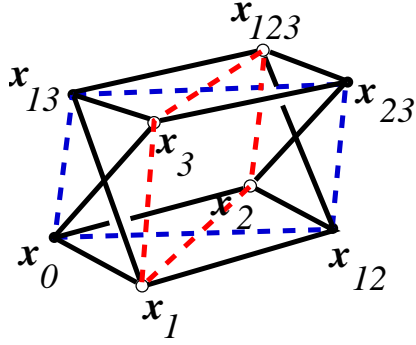}
\end{center}
\caption{Elementary hexahedron of the B-quadrilateral lattice}
\label{fig:BQLconditon}
\end{figure}
\begin{proof}[Proof of Lemma~\ref{lem:BQL-1}]
Consider a hypercube with planar faces as in Four Dimensional Consistency of the
Geometric Integrability Scheme. It consists with four ``initial
hexahedra" shearing vertex $x_0$, and the four ``final hexahedra" 
shearing vertex $x_{1234}$. 

To demonstrate the first implication consider three ``final hexahedra"
containing the vertex $x_{123}$. Because by the B-reduction condition 
\begin{equation*}
x_1 \in \langle x_{123}, x_{134}, x_{124} \rangle, \qquad
x_2 \in \langle x_{123}, x_{124}, x_{234} \rangle, \qquad
x_3 \in \langle x_{123}, x_{134}, x_{234} \rangle, 
\end{equation*}
then the three planes above (and therefore the points $x_1$, $x_2$ and $x_3$)
are contained in the three dimensional subspace
$\langle x_{123}, x_{124}, x_{134},  x_{234}\rangle$. Notice that this subspace
contains also $x_4$ (as a point of the plane 
$\langle x_{124}, x_{134},  x_{234}\rangle$).
Another three dimensional subspace 
$\langle x_{123}, x_{12}, x_{13},  x_{23}\rangle$
is the subspace of the initial hexahedron containing $x_{123}$. 
By construction (according to the Geometric Integrability Scheme)
it contains also the points $x_{1}$, $x_{2}$ and $x_{3}$. Both
subspaces are different (one contains $x_4$ and the other does not), 
and belong to the four dimensional subspace
$\langle x_{0}, x_{1}, x_{2}, x_{3}, x_{4} \rangle $ of the hypercube.
Therefore their intersect in a plane. Therefor the points  
$x_1$, $x_2$, $x_3$ and $x_{123}$ are coplanar, i.e., the implication holds for
one of the ``initial hexahedra"; the statement for three
others can be shown analogously.

To show the backward implication we apply similar arguments, but for a ``final
hexahedron" of the hypercube --- this time let us concentrate on that containing
$x_1$. 
Notice that, by the assumption, the three planes 
\begin{equation*}
\langle x_{1}, x_{2}, x_{3} \rangle, \qquad
\langle x_{1}, x_{2}, x_{4} \rangle, \qquad
\langle x_{1}, x_{3}, x_{4} \rangle.
\end{equation*} 
contain, respectively, $x_{123}$, $x_{124}$ and $x_{134}$.
They belong therefore to the subspace 
$\langle x_{1}, x_{2}, x_{3}, x_{4} \rangle$ of dimension three, which contains
also the point $x_{234}$ (as a point of the plane 
$\langle x_{2}, x_{3},  x_{4}\rangle$).
Another three dimensional subspace 
$\langle x_{1}, x_{12}, x_{13}, x_{14}\rangle$, of the final hexahedron we are
considering, by construction (according to the Geometric Integrability Scheme)
also contains the points $x_{123}$, $x_{124}$ and $x_{134}$. Notice that this
subspace cannot contain $x_{234}$, because
it would contain then all the vertices of the hypercube). Both
subspaces are different, and belong to the four dimensional subspace
\begin{equation*}
\langle x_{0}, x_{1}, x_{2}, x_{3}, x_{4} \rangle 
=\langle x_{1234}, x_{123}, x_{124}, x_{134}, x_{234} \rangle ,
\end{equation*}
then they both intersect in a plane. This plane contains the points  
$x_1$, $x_{123}$, $x_{124}$ and $x_{134}$, which shows that the hexahedron under
investigation  satisfies the B-reduction condition; the statement for three
others can be shown analogously.
\end{proof}

The geometric proof of Lemma~\ref{lem:BQL-comm} can be obtained by application: 
(i) its equivalence with certain theorem concerning the so called quadrangular
set of points \cite{BQL}, an (ii) equivalence of that theorem with validity
of the Pappus' configuration \cite{Coxeter-PG}. Below we give a direct 
algebraic proof. 
\begin{proof}[Algebraic proof of Lemma~\ref{lem:BQL-comm}] 
In what follows, by $\bx\in\mathbb{D}^{M+1}_*$ we denote the homogeneous 
coordinates of a point
$x\in\mathbb{P}^M(\mathbb{D})$; recall that we deal with right vector spaces.

The coplanarity of the four points $x_0$, $x_1$, $x_2$ and $x_{12}$ can be
algebraically expressed as the linear relation 
\begin{equation*}
\bx_{0} \alpha+ \bx_{1}\beta + \bx_{2} \gamma+ \bx_{12}\delta = 0, 
\end{equation*}
where, by the generality assumption (no three of the points are collinear), 
all the coefficients do not vanish. Suitably rescaling the homogeneous
coordinates of the points we can transfer above equation 
to the form 
\begin{equation} \label{eq:BQL-gauge-initial-12}
\bx_{12} = \bx_0 + \bx_{1} + \bx_{2} .
\end{equation}
In the equation expressing coplanarity of the points $x_0$, $x_1$, $x_3$ and 
$x_{13}$ we can again rescale the homogeneous
coordinates of $x_3$ and $x_{13}$ to get
\begin{equation} \label{eq:BQL-gauge-initial-13}
\bx_{13} = \bx_0 + (\bx_{1} + \bx_{3}) a.
\end{equation}
However, the coplanarity of 
$x_0$, $x_2$, $x_3$ and $x_{23}$ can be
expressed, by plaing with the gauge of $\bx_{23}$, at most as
\begin{equation} \label{eq:gauge-23-bad}
\bx_{23} = \bx_0 + \bx_{2} b + \bx_{3}c. 
\end{equation}
Then the additional
condition of coplanarity of $x_0$, $x_{12}$, $x_{13}$ and $x_{23}$,
which is equivalent to existence of $\lambda,\mu,\nu\in\mathbb{D}$ such that the
expression
\begin{equation*}
\bx_{12}\lambda + \bx_{13} \mu + \bx_{23} \nu 
\end{equation*}
is proportional to $\bx_0$, gives $c=-b$.

The homogeneous coordinates of the point $x_{123}$ are given by
\begin{equation*}
\bx_{123} = \bx_{1}A + \bx_{12}B +\bx_{13}C =
\bx_{2}\tilde{A} + \bx_{23}\tilde{B} +\bx_{12}\tilde{C} = 
\bx_{3}A^\prime+ \bx_{13}B^\prime +\bx_{23}C^\prime,
\end{equation*}
where the nine coefficients $A$,\dots ,$C^\prime$ (to be determined) are given
up to a common factor. Using equations 
\eqref{eq:BQL-gauge-initial-12}, \eqref{eq:BQL-gauge-initial-13}
and  \eqref{eq:gauge-23-bad} with $c=-b$ we obtain 
decomposition of $\bx_{123}$ in terms of the basis vectors $\bx_0, \bx_1, \bx_2,
\bx_3$
\begin{eqnarray*}
\bx_{123} &= & \bx_0 (B+C) + \bx_1 (A + B + aC) + \bx_2 B + \bx_3 aC , \\
&= & \bx_0 (\tilde{B} + \tilde{C}) + \bx_1 \tilde{C} + \bx_2 (\tilde{A} +
b\tilde{B} +\tilde{C}) - \bx_3 b \tilde{B} , \\
&= & \bx_0 (B^\prime + C^\prime) + \bx_1 a B^\prime + \bx_2 b C^\prime  + \bx_3
(A^\prime + aB^\prime - b C^\prime). 
\end{eqnarray*}
In consequence we obtain eight equations 
\begin{gather*}
B+C = \tilde{B} + \tilde{C} = B^\prime + C^\prime, \qquad
A + B + aC = \tilde{C} = a B^\prime, \\
B = \tilde{A} + b\tilde{B} +\tilde{C} = b C^\prime, \qquad
aC = - b \tilde{B} = A^\prime + aB^\prime - b C^\prime,
\end{gather*}
which allow to find the coefficients $A$,\dots ,$C^\prime$. 

The additional requirement $x_{123}\in \langle x_1, x_2 , x_3 \rangle$
algebraically means that
the coefficients in front of $\bx_0$ in the above decompositions vanish.
Neglecting three of the above equations which simply express $A$,
$\tilde{A}$ and $A^\prime$ in terms of six other coefficients
$B$,\dots , $C^\prime$, we obtain the system
\begin{gather*}
B+C = \tilde{B} + \tilde{C} = B^\prime + C^\prime = 0 , \\
\tilde{C} = a B^\prime, \qquad B = b C^\prime, \qquad
aC = - b \tilde{B},
\end{gather*}
which allows for nontrivial solution if and only if $ab = ba$.
\end{proof}

\section{Conclusions and discussion}
Motivated by validity of the Geometric Integrability Scheme in projective spaces
over division rings we investigated basic properties of the quadrilateral 
lattices in such spaces and the corresponding
version of the discrete Darboux equations. In particular, we showed that basic
ingredients of the vectorial fundamental transformation of quadrilateral
lattices transfer to such a setting almost without changes (one has to take care
of correct ordering only). We would like to mention that in the incidence 
geometry one considers also
more general spaces over rings \cite{Veldkamp}, which should provide geometric
interpretation for the matrix Darboux equations.

We also investigated possibility of obtaining the noncommutative version of the
B-(Moutard) quadrilateral lattices. It turns out that the
additional incidence geometry assumptions which imply integrability of such
lattices hold if and only if the division ring under consideration is
commutative (hence a field). The question remains open for more general 
geometries over rings.

\section*{Acknowledgements}
The author acknowledges numerous discussions with Jaros{\l}aw Kosiorek, Andrzej 
Matra\'{s} and Mark Pankov on foundations of geometry. The paper was supported 
by the Polish Ministry of Science and Higher Education research grant 
1~P03B~017~28. 

\bibliographystyle{amsplain}

\providecommand{\bysame}{\leavevmode\hbox to3em{\hrulefill}\thinspace}

\end{document}